\def\isdraft{0}
\newtheorem{theorem}{Theorem}
\newtheorem{corollary}[theorem]{Corollary}
\newtheorem{lemma}[theorem]{Lemma}
\theoremstyle{definition} 
\newtheorem{definition}[theorem]{Definition}
\newtheorem{remark}[theorem]{Remark}
\newtheorem{example}[theorem]{Example}
\newtheorem{problem}[theorem]{Problem}
\newtheorem{pseudocode}[theorem]{Pseudocode}
\newcommand{\righttherefore}{:\joinrel\cdot\,}
\title{Logic program proportions}
\author{
	Christian Anti\'c
}
\address{
	christian.antic@icloud.com\\
	Vienna, Austria
}
\begin{document}
\maketitle
\begin{abstract} 
	The purpose of this paper is to present a fresh idea on how symbolic learning might be realized via analogical reasoning. For this, we introduce directed analogical proportions between logic programs of the form ``$P$ transforms into $Q$ as $R$ transforms into $S$'' as a mechanism for deriving similar programs by analogy-making. The idea is to instantiate a fragment of a recently introduced abstract algebraic framework of analogical proportions in the domain of logic programming. Technically, we define proportions in terms of modularity where we derive abstract forms of concrete programs from a ``known'' source domain which can then be instantiated in an ``unknown'' target domain to obtain analogous programs. To this end, we introduce algebraic operations for syntactic logic program composition and concatenation. Interestingly, our work suggests a close relationship between modularity, generalization, and analogy which we believe should be explored further in the future. In a broader sense, this paper is a further step towards a mathematical theory of logic-based analogical reasoning and learning with potential applications to open AI-problems like commonsense reasoning and computational learning and creativity.
\end{abstract}

\section{Introduction}

This paper is a first step towards an answer to the following question:

\begin{quote} 
	How can a computer ``creatively'' generate interesting logic programs from a collection of given ones by using analogical reasoning?
\end{quote}

For example, given a program for the addition of natural numbers, how can we {\em systematically} generate from it a program for ``adding'' lists using analogy? This question can be stated mathematically in the form of a proportional equation between programs as
\begin{align*} 
	Nat:Plus::List:X
\end{align*} where $Nat$ and $List$ are programs for generating the natural numbers and lists, respectively, $Plus$ is the arithmetical program for the addition of numbers, and $X$ is a placeholder for a concrete program, the solution to the equation. That is, we are asking for a program $X=PlusList$ which operates on lists and is analogous to the program $Plus$. A solution to this equation indeed yields a reasonable program for the ``addition'' of lists, namely the program for appending lists (see Examples \ref{exa:mb{Plus}} and \ref{exa:Nat_Plus_List_PlusList}).

To this end, in this paper we introduce directed analogical proportions between logic programs of the form ``$P$ transforms into $Q$ as $R$ transforms into $S$'' --- in symbols, $P\to Q\righttherefore R\to S$ --- as an instance of \citeS{Antic22} general framework of analogical proportions as a mechanism for constructing similar programs by analogy-making. The purpose of this paper is to present a fresh idea on how logic-based symbolic learning can be realized via analogical reasoning, not to give full-fledged solutions to real-world problems --- it is the {\em first} paper in a promising direction, and hopefully not the last one.

In the literature, computational learning usually means learning from (a massive amount of) data. For example, in ``deep learning'' artificial neural networks (ANNs) extract abstract features from data sets \cite<cf.>{Goodfellow16,LeCun15} and, on the symbolic side, inductive logic programs (ILPs) are provided with positive and negative examples of the target concept to be learned \cite<cf.>{Muggleton91}. Another characteristic feature of current machine learning systems is the focus on {\em goal-oriented problem solving} --- a typical task of ANNs is the categorization of the input data (e.g., finding cats in images) and ILPs try to construct logic programs from given examples which partially encode the problem to be solved (e.g., adding numbers or sorting lists).

The emphasis in this paper is different as we believe that {\em program generation} is equally important to artificial intelligence --- and may even be more important for artificial {\em general} intelligence than problem-solving --- and deserves much more attention. This is Sir Michael Atiyah's (fields medalist) answer to the question of how he selects a problem to study:
\begin{quote} 
	I think that presupposes an answer. I don't think that's the way I work at all. Some people may sit back and say, `I want to solve this problem' and they sit down and say, `How do I solve this problem?' I don't. I just move around in the mathematical waters, thinking about things, being curious, interested, talking to people, stirring up ideas; things emerge and I follow them up. Or I see something which connects up with something else I know about, and I try to put them together and things develop. I have practically never started off with any idea of what I'm going to be doing or where it's going to go. I'm interested in mathematics; I talk, I learn, I discuss and then interesting questions simply emerge. I have never started off with a particular goal, except the goal of understanding mathematics. \cite<cf.>{Gowers00}.
\end{quote} The process Sir Michael Atiyah is describing is the generation of new knowledge by connecting existing knowledge in a novel way without any specific ``goal'' in mind and it is believed by many researchers that analogy-making is the core mechanism for doing so \cite<e.g.>{Hofstadter13}. 

In the framework presented in this paper, ``program generation'' means the construction of novel logic programs in an ``unknown'' target domain via {\em analogical transfer} --- realized by directed logic program proportions via generalization and instantiation --- from a ``known'' source domain. This approach is similar to ILP in that novel programs are derived from experience represented as knowledge bases consisting of ``known'' programs. However, it differs significantly from ILP on {\em how} novel programs are constructed from experience --- while in ILP the construction is goal-oriented and thus guided by partial specifications in the form of given examples, in our setting programs are derived by analogy-making to similar programs (without the need for concrete examples). For instance, we may ask --- by analogy to arithmetic --- what it means to ``multiply'' two arbitrary lists (cf. \prettyref{exa:Nat_Plus_List_PlusList}) or to reverse ``even'' lists (cf. Examples \ref{exa:mb{Even}} and \ref{exa:Nat_Even_Reverse_EvenReverse}). Here, contrary to ILP, we do {\em not} expect a supervisor to provide the system with examples explaining list ``multiplication'' or ``evenness'' of lists, but instead we assume that there are arithmetic programs operating on numbers (i.e. numerals) --- programs defining multiplication and evenness of numbers --- which we can transfer to the list domain.

\begin{example}\label{exa:exa} Imagine two domains, one consisting of numbers (or numerals) and the other made up of lists. We know from basic arithmetic what it means to add two numerals. Now suppose we want to transfer the concept of addition to the list domain. We can then ask --- {\em by analogy} --- the following question: What does it mean to ``add'' two lists? We can transform this question into the following directed analogical equation:
\begin{align}\label{equ:Nat_Plus_List_X_1} 
	Nat\to  Plus\righttherefore List\to X.
\end{align} In our framework, $Nat$, $Plus$, and $List$ will be logic programs, and $X$ will be a program variable standing for a program which is obtained from $List$ as $Plus$ is obtained from $Nat$. That is, solutions to \prettyref{equ:Nat_Plus_List_X_1} will be programs implementing `addition of lists'. The idea is to derive an abstract form ${\bf Plus}(Z)$ as a generalization of the concrete program $Plus$ such that 
\begin{align*} 
	Plus={\bf Plus}(Nat).
\end{align*} That is, we factor $Plus$ into subprograms and generalize every instance of the subprogram $Nat$ in $Plus$ by a program variable $Z$. We can then instantiate the form ${\bf Plus}(Z)$ with $List$ to obtain a plausible solution to \prettyref{equ:Nat_Plus_List_X_1}, that is, a program for `addition of lists':
\begin{align*} 
	Nat\to {\bf Plus}(Nat)\righttherefore List\to {\bf Plus}(List).
\end{align*} It is important to emphasize that in order to be able to decompose the program $Plus$ with respect to the above algebraic operations so that $Nat$ occurs as a factor, we need to introduce novel algebraic operations on logic programs (\prettyref{sec:Algebra_of_logic_programs}). We will return to this example, in a more formal manner, in Examples \ref{exa:mb{Plus}} and \ref{exa:Nat_Plus_List_PlusList}.
\end{example}

In (semi-)automatic programming \cite{Czarnecki00},\footnote{\url{https://en.wikipedia.org/wiki/Automatic_programming}} one usually wants to construct a program given some specification. This is complicated by the fact that writing the complete specification is as complex as writing the logic program itself \cite<cf.>{Kowalski84}. In this paper, we therefore propose a differnt view on automatic programming motivated by Sir Atiyah's approach to mathematical research quoted above: Instead of trying to satisfy a (complete) specification, a teacher iteratively constructs a source program $S$ in a ``known'' domain by constructing a sequence of programs $S_1\to S_2\to\ldots\to S_n\to S$ whose ``limit'' is $S$; the student then tries to ``copycat''\footnote{Homage to \citeA{Hofstadter95a}.} the same proceses in an ``unknown'' target domain by constructing the sequence $P_1\to P_2\to\ldots P_n\to P$ such that
\begin{align*} 
	S_1\to S_2 &\righttherefore P_1\to P_2,\qquad S_2\to S_3 \righttherefore P_2\to P_3,\quad\ldots\quad S_n\to S \righttherefore P_n\to P.
\end{align*} By construction, the program $P$ is then ``analogous'' to the source program $S$.

Interestingly, our work suggests a close relationship between modularity, generalization, and analogy which we believe should be explored further in the future.


In a broader sense, this paper is a first step towards a mathematical theory of logic-based analogical reasoning and learning in knowledge representation and reasoning systems with potential applications to fundamental AI-problems like commonsense reasoning and computational learning and creativity.


\section{Logic programs}

In this section, we recall the syntax and semantics of logic programming by mainly following the lines of \citeA{Apt90}.

\subsection{Syntax}

An ({\em unranked first-order}) {\em language} $L$ consists of a set $Ps_L$ of {\em predicate symbols}, a set $Fs_L$ of {\em function symbols}, a set $Cs_L$ of {\em constant symbols}, and a denumerable set $V=\{z_1,z_2,\ldots\}$ of {\em variables}. Terms and atoms are defined in the usual way. Substitutions and (most general) unifiers of terms and (sets of) atoms are defined as usual.

Let $L$ be a language. A ({\em Horn logic}) {\em program} over $L$ (or {\em $L$-program}) is a set of {\em rules} of the form
\begin{align}\label{equ:r} 
	A_0\leftarrow A_1,\ldots,A_k,\quad k\geq 0,
\end{align} where $A_0,\ldots,A_k$ are atoms over $L$. 
It will be convenient to define, for a rule $r$ of the form \prettyref{equ:r}, $head(r):=\{A_0\}$ and $body(r):=\{A_1,\ldots,A_k\}$, extended to programs by $head(P):=\bigcup_{r\in P}head(r)$ and $body(P):=\bigcup_{r\in P}body(r)$. In this case, the {\em size} of $r$ is $k$ denoted by $sz(r)$. A {\em fact} is a rule with empty body and a {\em proper rule} is a rule which is not a fact. We denote the facts and proper rules in $P$ by $facts(P)$ and $proper(P)$, respectively. 

We define the {\em skeleton} inductively as follows: (i) for an atom $p(\vec t)$, define $sk(p(\vec t)):=p$; (ii) for a rule $r$ of the form \prettyref{equ:r}, define 
\begin{align*} 
	sk(r):=sk(A_0)\leftarrow sk(A_1),\ldots,sk(A_k);
\end{align*} finally, (iii) define the skeleton of a program $P$ rule-wise as $sk(P):=\{sk(r)\mid r\in P\}$ (see \prettyref{equ:sk_Tree}). 

A program $P$ is {\em ground} if it contains no variables and we denote the grounding of $P$ which contains all ground instances of the rules in $P$ by $gnd(P)$. 

We call any bijective substitution a {\em renaming}. The set of all {\em variants} of $P$ is defined by $variants(P):=\bigcup_{\theta\text{ renaming}}P[\theta]$. The {\em main predicate} of a program is given by the name of the program in lower case letters if not specified otherwise. We will sometimes write $P\langle p\rangle$ to make the main predicate $p$ in $P$ explicit and we will occasionally write $P(\vec x)$ to indicate that $P$ contains variables among $\vec x=x_1,\ldots,x_n$, $n\geq 1$. We denote the program constructed from $P\langle p\rangle$ by replacing every occurrence of the predicate symbol $p$ with $q$ by $P[p/q]$.

\begin{example}\label{exa:Nat_} Later, we will be interested in the basic data structures of numerals, lists, and (binary) trees. The programs for generating numerals and lists are given by
\begin{align*} 
	Nat(x):=\left\{
	\begin{array}{l}
		nat(0)\\
		nat(s(x))\leftarrow nat(x).
	\end{array}
	\right\} \quad\text{and}\quad List(u,x):=\left\{
	\begin{array}{l}
		list([\;\;])\\
		list([u\mid x])\leftarrow list(x).
	\end{array}
	\right\}.
\end{align*} 

As is customary in logic programming, $[\;\;]$ and $[u\mid x]$ is syntactic sugar for $nil$ and $cons(u,x)$, respectively. The program for generating (binary) trees is given by
\begin{align*} 
	Tree(u,x,y):=\left\{
	\begin{array}{l}
		tree(void)\\
		tree(t(u,x,y))\leftarrow\\ 
		\qquad tree(x),\\
		\qquad tree(y).
	\end{array}
	\right\}.
\end{align*} For instance, the tree consisting of a root $a$, and two leafs $b$ and $c$ is symbolically represented as $tree(t(a,t(b,void,void),t(c,void,void)))$. The skeleton of $Tree$ is given by
\begin{align}\label{equ:sk_Tree} 
	sk(Tree)= \left\{
	\begin{array}{l}
		tree\\
		tree\leftarrow tree
	\end{array}
	\right\}.
\end{align} 

We will frequently refer to the programs above in the rest of the paper.
\end{example}

\subsection{Semantics}

An {\em interpretation} is any set of ground atoms. We define the {\em entailment relation}, for every interpretation $I$, inductively as follows:
\begin{itemize}
	\item For a ground atom $A$, $I\models A$ if $A\in I$.
	\item For a set of ground atoms $B$, $I\models B$ if $B\subseteq I$.
	\item For a ground rule $r$ of the form \prettyref{equ:r}, $I\models r$ if $I\models body(r)$ implies $I\models head(r)$.
	\item Finally, for a ground program $P$, $I\models P$ if $I\models r$ holds for each rule $r\in P$.
\end{itemize} 

In case $I\models gnd(P)$, we call $I$ a {\em model} of $P$. The set of all models of $P$ has a least element with respect to set inclusion called the {\em least model} of $P$ and denoted by $LM(P)$. 

We call a ground atom $A$ a ({\em logical}) {\em consequence} of $P$ --- in symbols $P\models A$ --- if $A$ is contained in the least model of $P$ and we say that $P$ and $R$ are ({\em logically}) {\em equivalent} if $LM(P)=LM(R)$.

\section{Algebra of logic programs}\label{sec:Algebra_of_logic_programs}

Our framework of analogical proportions between logic programs will be built on top of an algebra of logic programs which allows us to decompose programs into simpler modules\footnote{We use here the term ``module'' as a synonym for ``(sub-)program''.} via algebraic operations on programs. For this, it will be useful to introduce in this section two novel algebraic operations for logic program composition (\prettyref{sec:Composition}) and concatenation (\prettyref{sec:Concatenation}).

In the rest of the paper, $P$ and $R$ denote logic programs over some joint unranked first-order language $L$.

\subsection{Composition}\label{sec:Composition}

The rule-like structure of logic programs induces naturally a compositional structure which allows us to decompose programs rule-wise.

We define the ({\em sequential}) {\em composition} of $P$ and $R$ by\footnote{We write $X\subseteq_k Y$ in case $X$ is a subset of $Y$ consisting of $k$ elements.}
\begin{align*} 
	P\circ R:=\left\{head(r\vartheta)\leftarrow body(S\vartheta) \;\middle|\; 
	\begin{array}{l}
		r\in P\\
		S\subseteq_{sz(r)}variants(R)\\
		head(S\vartheta)=body(r\vartheta)\\
		\vartheta=mgu(body(r),head(S))
	\end{array}
	\right\}.
\end{align*}
\todo[inline]{korrekt? gibt es mgu von mengen?}

Roughly, we obtain the composition of $P$ and $R$ by resolving all body atoms in $P$ with the `matching' rule heads of $R$. This is illustrated in the next example, where we construct the even from the natural numbers via composition.

\begin{example}\label{exa:Even} Reconsider the program $Nat$ of \prettyref{exa:Nat_} generating the natural numbers. By composing the only proper rule in $Nat$ with itself, we obtain
\begin{align*} 
	\{nat(&s(x))\leftarrow nat(x)\}\circ\{nat(s(x))\leftarrow nat(x)\}=\{nat(s(s(x)))\leftarrow nat(x)\}.
\end{align*} Notice that this program, together with the single fact in $Nat$, generates the {\em even} numbers. Let us therefore define the program
\begin{align}\label{equ:Even} 
	Even:=\left(facts(Nat)\cup proper(Nat)^2\right)[nat/even]= \left\{
	\begin{array}{l}
		even(0)\\
		even(s(s(x)))\leftarrow\\ 
		\qquad even(x)
	\end{array}
	\right\},
\end{align} where
\begin{align*} 
	proper(Nat)^2=proper(Nat)\circ proper(Nat).
\end{align*} We will come back to this program in \prettyref{exa:mb{Even}}.
\end{example}

The following example shows that, unfortunately, composition is {\em not} associative.

\begin{example} Consider the rule
\begin{align*} 
    r:=a\leftarrow b,c
\end{align*} and the programs
\begin{align*} 
    P:= \left\{
    \begin{array}{l}
        b\leftarrow b\\
        c\leftarrow b,c
    \end{array}
    \right\} \quad\text{and}\quad R:= \left\{
    \begin{array}{l}
        b\leftarrow d\\
        b\leftarrow e\\
        c\leftarrow f
    \end{array}
    \right\}.
\end{align*} Let us compute $(\{r\}P)R$. We first compute $\{r\}P$ by noting that the rule $r$ has two body atoms and has therefore size 2, which means that there is only a single choice of subprogram $S$ of $P$ with two rules, namely $S=P$; this yields
\begin{align*} 
    \{r\}P=\{head(r)\leftarrow body(P)\}=\{a\leftarrow b,c\}=\{r\}.
\end{align*} Next we compute $(\{r\}P)R=\{r\}R$ by noting that now there are two possible $S_1,S_2\subseteq_2 R$ with $head(S_1)=head(S_2)=body(r)$, given by
\begin{align*} 
    S_1= \left\{
    \begin{array}{l}
        b\leftarrow d\\
        c\leftarrow f             
    \end{array}
    \right\} \quad\text{and}\quad S_2= \left\{
    \begin{array}{l}
        b\leftarrow e\\
        c\leftarrow f             
    \end{array}
    \right\}.
\end{align*} This yields
\begin{align*} 
    \{r\}R= \left\{
    \begin{array}{l}
        head(r)\leftarrow body(S_1)\\             
        head(r)\leftarrow body(S_2)             
    \end{array}
    \right\}= \left\{
    \begin{array}{l}
        a\leftarrow d,f\\             
        a\leftarrow e,f             
    \end{array}
    \right\}.
\end{align*} Let us now compute $\{r\}(PR)$. We first compute $PR$. We have
\begin{align*} 
    PR=\{b\leftarrow b\}R\cup \{c\leftarrow b,c\}R.
\end{align*} We easily obtain
\begin{align*} 
    \{b\leftarrow b\}R= \left\{
    \begin{array}{l}
        b\leftarrow d\\             
        b\leftarrow e             
    \end{array}
    \right\}.
\end{align*} Similar computations as above show
\begin{align*} 
    \{c\leftarrow b,c\}R= \left\{
    \begin{array}{l}
        c\leftarrow d,f\\             
        c\leftarrow d,e             
    \end{array}
    \right\}.
\end{align*} So in total we have
\begin{align*} 
    PR= \left\{
    \begin{array}{l}
        b\leftarrow d\\             
        b\leftarrow e\\
        c\leftarrow d,f\\             
        c\leftarrow d,e             
    \end{array}
    \right\}.
\end{align*} To compute $\{r\}(PR)$, we therefore see that there are four two rule subprograms $S_1,S_2,S_4,S_4\subseteq_2 PR$ with $b$ or $c$ in their heads given by
\begin{align*} 
    S_1= \left\{
    \begin{array}{l}
        b\leftarrow d\\
        c\leftarrow d,f             
    \end{array}
    \right\} \qquad S_2= \left\{
    \begin{array}{l}
        b\leftarrow d\\
        c\leftarrow d,e             
    \end{array}
    \right\} \qquad S_3= \left\{
    \begin{array}{l}
        b\leftarrow e\\
        c\leftarrow d,f             
    \end{array}
    \right\} \qquad S_4= \left\{
    \begin{array}{l}
        b\leftarrow e\\
        c\leftarrow d,f             
    \end{array}
    \right\}.
\end{align*} Hence, we have
\begin{align*} 
    \{r\}PR= \left\{
    \begin{array}{l}
        head(r)\leftarrow body(S_1)\\             
        head(r)\leftarrow body(S_2)\\             
        head(r)\leftarrow body(S_3)\\             
        head(r)\leftarrow body(S_4)
    \end{array}
    \right\}= \left\{
    \begin{array}{l}
        a\leftarrow d,f\\
        a\leftarrow d,e\\
        a\leftarrow d,e,f             
    \end{array}
    \right\}.
\end{align*} We have thus shown
\begin{align*} 
    \{r\}(PR)= \left\{
    \begin{array}{l}
        a\leftarrow d,f\\
        a\leftarrow e,f\\
        a\leftarrow d,e,f\\
    \end{array}
    \right\}\neq \left\{
    \begin{array}{l}
        a\leftarrow d,f\\
        a\leftarrow e,f
    \end{array}
    \right\}=(\{r\}P)R.
\end{align*}
\end{example}

\begin{remark}\label{rem:LM_circ} Notice that the least model of $P\circ R$ is, in general, {\em not} obtained from the least models of its factors $P$ and $R$ in an obvious way. For example, the least models of $P:=\{a\leftarrow b\}$ and $R:=\left\{
	\begin{array}{l}
		b\\
		a\leftarrow b
	\end{array}
	\right\}$ are $\emptyset$ and $\{a,b\}$, respectively, but the least model of $P\circ R=\{a\}$ is $\{a\}$.
\end{remark}

\subsection{Concatenation}\label{sec:Concatenation}

In many cases, a program is the `concatenation' of two or more simpler programs on an atomic level. A typical example is the program
\begin{align}\label{equ:Length} 
	Length:=\left\{
	\begin{array}{l}
		length([\;\;],0)\\
		length([u\mid x],s(y))\leftarrow\\ 
		\qquad length(x,y)
	\end{array}\right\},
\end{align} which is, roughly, the `concatenation' of $List$ in the first and $Nat$ in the second argument modulo renaming of predicate symbols (cf. \prettyref{exa:Nat_}). This motivates the following definition.

We define the {\em concatenation} of $P$ and $R$ inductively as follows:
\begin{enumerate}
\item For atoms $p(\vec s)$ and $p(\vec t)$, we define $$p(\vec s)\cdot p(\vec t):=p(\vec s,\vec t),$$ extended to sets of atoms $B$ and $B'$ by
\begin{align*} 
	B\cdot B':=\left\{A\cdot A'\mid A\in B,A'\in B':sk(A)=sk(A')\right\}.
\end{align*}
\item For rules $r$ and $r'$ with $sk(r)=sk(r')$, we define
\begin{align*} 
	r\cdot r':=(head(r)\cdot head(r'))\leftarrow (body(r)\cdot body(r')).
\end{align*}
\item Finally, we define the concatenation of $P$ and $R$ by
\begin{align*} 
	P\cdot R:=\{r\cdot r'\mid r\in P,r'\in R:sk(r)=sk(r')\}.
\end{align*}
\end{enumerate} We will often write $PR$ instead of $P\cdot R$ in case the operation is understood from the context.

We can now formally deconcatenate the list program from above as
\begin{align*}
	Length&=\left\{
	\begin{array}{l}
		length([\;\;])\\
		length([u\mid x])\leftarrow length(x)
	\end{array}
	\right\}\cdot \left\{
	\begin{array}{l}
		length(0)\\
		length(s(y))\leftarrow length(y)
	\end{array}
	\right\}\\
	&= 
	\left\{
	\begin{array}{l}
		list([\;\;])\\
		list([u\mid x])\leftarrow list(x)
	\end{array}
	\right\}[list/length]\cdot \left\{
	\begin{array}{l}
		nat(0)\\
		nat(s(y))\leftarrow nat(y)
	\end{array}
	\right\}[nat/length],
\end{align*} which is equivalent to
\begin{align}\label{equ:List_cdot_Nat}
	Length=List[list/length]\cdot Nat[nat/length].
\end{align} We will return to deconcatenations of this form in \prettyref{sec:Logic_program_forms} (cf. \prettyref{exa:mb{Plus}}).


\begin{theorem} Concatenation is associative.
\end{theorem}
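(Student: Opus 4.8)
The plan is to prove associativity by structural induction that mirrors the three-level inductive definition: first atoms, then sets of atoms (which covers rule heads and bodies), then rules, and finally programs. The single fact that makes the whole argument go through is that concatenation never changes a predicate symbol, so I would isolate and record this invariant at the outset and then reuse it to reconcile the predicate-matching side conditions that appear once the two groupings $(X\cdot Y)\cdot Z$ and $X\cdot(Y\cdot Z)$ are unfolded.

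For the base case I would observe that on atoms $p(\underline s)\cdot p(\underline t)=p(\underline s,\underline t)$ is merely concatenation of the argument tuples, which is associative, so $(p(\underline s)\cdot p(\underline t))\cdot p(\underline u)=p(\underline s,\underline t,\underline u)=p(\underline s)\cdot(p(\underline t)\cdot p(\underline u))$. The key observation to extract here is the invariant $pred(A\cdot A')=pred(A)=pred(A')$ whenever $A\cdot A'$ is defined: concatenation only grows the argument list, which is harmless because the language is unranked, and it leaves the predicate symbol untouched.

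Next I would treat sets of atoms. Unfolding $(B\cdot B')\cdot B''$, a typical element is $(A\cdot A')\cdot A''$ with $A\in B$, $A'\in B'$, $A''\in B''$ subject to $pred(A)=pred(A')$ together with $pred(A\cdot A')=pred(A'')$. Using the invariant $pred(A\cdot A')=pred(A)$, the second condition becomes $pred(A)=pred(A'')$, so the combined side condition is the symmetric $pred(A)=pred(A')=pred(A'')$. The same computation for $B\cdot(B'\cdot B'')$ produces exactly the same index set, and since $(A\cdot A')\cdot A''=A\cdot(A'\cdot A'')$ by the base case, the two sets coincide. Applying this set-level identity separately to heads and bodies, and noting that the invariant also gives $pred(r\cdot r')=pred(r)$ whenever $pred(r)=pred(r')$ (so that the rule-matching conditions collapse in the same symmetric way), yields associativity at the rule level: $(r\cdot r')\cdot r''=r\cdot(r'\cdot r'')$ whenever $pred(r)=pred(r')=pred(r'')$.

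Finally, the program level follows by the identical bookkeeping: an element of $(P\cdot R)\cdot T$ is $(r\cdot r')\cdot r''$ with the nested matching conditions, which by the invariant reduce to $pred(r)=pred(r')=pred(r'')$, and this matches the symmetric condition obtained for $P\cdot(R\cdot T)$; rule-level associativity then finishes the proof. I expect no genuinely deep step here — the real work, and the only place where one can slip, is the careful propagation of the predicate-matching side conditions through each level. The main obstacle is therefore purely bookkeeping: one must verify the predicate-preservation invariant cleanly and confirm that the two-stage matching condition on each grouping simplifies to the single symmetric condition $pred(\cdot)=pred(\cdot)=pred(\cdot)$, since otherwise the two index sets might fail to coincide and the set equalities would break.
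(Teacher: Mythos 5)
Your proof is correct and follows essentially the same route as the paper's: lift associativity of tuple (i.e., string) concatenation from atoms to sets of atoms, and from there to rules and programs. The paper's own proof is a one-sentence version of exactly this argument; your explicit predicate-preservation invariant $pred(A\cdot A')=pred(A)=pred(A')$ simply makes rigorous the side-condition bookkeeping that the paper leaves implicit.
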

\begin{proof} We know that the concatenation of words is associative. From this we deduce
\begin{align}\label{equ:AA'A''} A\cdot(A'\cdot A'')=(A\cdot A')\cdot A'',\quad\text{for any atoms $A,A',A''$ with $sk(A)=sk(A')=sk(A'')$.}
\end{align} This implies
\begin{align*} B\cdot(B'\cdot B'')&=\left\{A\cdot(A'\cdot A'') \;\middle|\; A\in B,A'\in B',A''\in B'',sk(A)=sk(A')=sk(A'')\right\}\\
&\stackrel{\prettyref{equ:AA'A''}}=\left\{(A\cdot A')\cdot A'' \;\middle|\; A\in B,A'\in B',A''\in B'',sk(A)=sk(A')=sk(A'')\right\}\\
&=(B\cdot B')\cdot B'',\quad\text{for any sets of atoms $B,B',B''$.}
\end{align*} From this, we deduce the associativity of rule concatenation:
\begin{align*} p\cdot (q\cdot r)&=(head(p)\cdot (head(q)\cdot head(r)))\leftarrow (body(p)\cdot (body(q)\cdot body(r)))\\
&=((head(p)\cdot head(q))\cdot head(r))\leftarrow (body(p)\cdot body(q))\cdot body(r))\\
&=(p\cdot q)\cdot r.
\end{align*} We have
\begin{align*} sk(r)=sk(s) \quad\Leftrightarrow\quad sk(r\cdot s)=sk(r)=sk(s),
\end{align*} which finally implies the associativity of concatenation via
\begin{align*} P\cdot(Q\cdot R)&=\bigcup{\substack{p\in P,q\in Q,r\in R\\sk(p)=sk(q)=sk(r)}}(p\cdot (q\cdot r))=\bigcup{\substack{p\in P,q\in Q,r\in R\\sk(p)=sk(q)=sk(r)}}((p\cdot q)\cdot r)=(P\cdot Q)\cdot R.
\end{align*}
\end{proof}

\begin{remark} The least model of $P\cdot R$ is {\em not} obtained from the least models of $P$ and $R$ in an obvious way. For example, we have
\begin{align*} 
	LM(Nat\cdot Nat)=\{nat(s^n(0),s^n(0))\mid n\geq 0\}
\end{align*} whereas
\begin{align*} 
	LM(Nat)\cdot LM(Nat)=\{nat(s^m(0),s^n(0))\mid m,n\geq 0\}.
\end{align*}
\end{remark}

\begin{definition}\label{def:algebra} The {\em algebra of logic programs} over $L$ consists of all $L$-programs together with all unary and binary  operations on programs introduced above, including composition, concatenation, union, and the least model operator.
\end{definition}

\section{Logic program forms}\label{sec:Logic_program_forms}

Recall from \prettyref{exa:exa} that we wish to derive abstract generalizations of concrete programs, which can then be instantiated to obtain similar programs. We formalize this idea via logic program forms as follows.

In the rest of the paper, we assume that we are given {\em program variables} $X,Y,Z,\ldots$ as placeholders for concrete programs and an algebra of logic programs $\mathfrak P$ over some fixed language $L$ (cf. \prettyref{def:algebra}). 

A ({\em logic program}) {\em form} over $\mathfrak P$ (or {\em $\mathfrak P$-form}\todo{$L$-form?}) is any well-formed expression built up from $L$-programs, program variables, and all algebraic operations on programs from above including substitution. More precisely, $\mathfrak P$-forms are defined by the grammar
\begin{align*} 
	\mathbf F::=P\mid Z\mid\mathbf F\cup\mathbf F\mid\mathbf F\circ\mathbf F\mid\mathbf F\cdot\mathbf F\mid\mathbf F\sigma\mid LM(\mathbf F)\mid head(\mathbf F)\mid body(\mathbf F)\mid facts(\mathbf F)\mid proper(\mathbf F),
\end{align*} where $P\in\mathfrak P$ is an $L$-program, $Z$ is a program variable, and $\sigma$ is a substitution. We will denote forms by boldface letters. 

Forms generalize logic programs and induce transformations on programs in the obvious way by replacing program variables with concrete programs. This means that we can interpret logic program forms as `meta-terms' over the algebra of logic programs with programs as `constants', program variables as variables, and algebraic operations on programs as `function symbols'. This is illustrated in the following examples.

\begin{example}\label{exa:mb{Plus}} The program $Plus$ of \prettyref{exa:exa} for the addition of numerals is given by
\begin{align*} 
	Plus:= \left\{
	\begin{array}{l}
		plus(0,y,y)\\
		plus(s(x),y,s(z))\leftarrow\\
		\qquad plus(x,y,z)
	\end{array}
	\right\}.
\end{align*} 

Recall from \prettyref{exa:exa} that we wish to derive a form ${\bf Plus}$ from $Plus$ which abstractly represents addition. Notice that $Plus$ is, essentially, the concatenation of the program $Nat$ (\prettyref{exa:Nat_}) in the first and last argument together with a middle part. Formally, we have
\begin{align*} 
	Plus&= 
		\left\{
			\begin{array}{l}
				plus(0)\\
				plus(s(x))\leftarrow\\
				\qquad plus(x)		
			\end{array}
		\right\}\cdot \left\{
			\begin{array}{l}
				plus(y,y)\\
			  	plus(y)\leftarrow\\ 
			  	\qquad plus(y)
			\end{array}
		\right\}\cdot \left\{
			\begin{array}{l}
				plus()\\
				plus(s(z))\leftarrow\\
				\qquad plus(z)		
			\end{array}
		\right\}\\
		&=\left\{
			\begin{array}{l}
				nat(0)\\
				nat(s(x))\leftarrow\\
				\qquad nat(x)		
			\end{array}
		\right\}[nat/plus]\cdot \left\{
			\begin{array}{l}
				plus(y,y)\\
			  	plus(y)\leftarrow\\ 
			  	\qquad plus(y)		
			\end{array}
		\right\}\cdot \left\{
			\begin{array}{l}
				plus()\\
				nat(s(z))\leftarrow\\
				\qquad nat(z)		
			\end{array}
		\right\}[nat/plus]\\
		&=Nat(x)[nat/plus]\cdot\left\{
			\begin{array}{l}
				plus(y,y)\\
			  	plus(y)\leftarrow\\ 
			  	\qquad plus(y)
			\end{array}
		\right\}\cdot\left\{
			\begin{array}{l}
			  	plus()\\
			  	proper(Nat(z))[nat/plus]
			\end{array}
		\right\}.
\end{align*} 

We therefore define the form ${\bf Plus}(Z\langle q\rangle(\vec x))$, where $Z$ is a program variable, $q$ stands for the main predicate symbol in $Z$, and $\vec x$ is a sequence of variables, by
\begin{align*} 
	{\bf Plus}(Z\langle q\rangle(\vec x)):=
		Z[q/plus]\cdot
		\left\{
			\begin{array}{l}
			  plus(y,y)\\
			  plus(y)\leftarrow\\ 
			  \qquad plus(y)
			\end{array}
		\right\}\cdot\left\{
			\begin{array}{l}
			  plus()\\
			  proper(Z)[q/plus,\vec x/\vec z]
			\end{array}
		\right\}.
\end{align*} Here $\vec z$ is a sequence of fresh variables distinct from the variables in $\vec x$. We can think of ${\bf Plus}$ as a generalization of $Plus$ where we have abstracted from the concrete data type $Nat$. In fact, $Plus$ is an instance of ${\bf Plus}$:
\begin{align*} Plus={\bf Plus}(Nat(x)).
\end{align*} 

Similarly, instantiating the form ${\bf Plus}$ with the program $List(u,x)$ for constructing the data type of lists (cf. \prettyref{exa:Nat_}) yields the program\footnote{Here we have instantiated the sequence of variables $\vec x$ with $\vec z=(u,z)$.}
\begin{align*}
	&plus([\;\,],y,y)\\
	&plus([u\mid x],y,[u\mid z])\leftarrow\\
	&\qquad plus(x,y,z),
\end{align*} which is the program for appending lists. For example, we have
\begin{align*} {\bf Plus}(List)\models plus([a,b],[c,d],[a,b,c,d]).
\end{align*}

As a further example, we want to define the `addition' of (binary) trees by instantiating the form ${\bf Plus}$ with $Tree$. Note that we now have multiple choices: since $Tree(u,x,y)$ contains {\em two} variables $x$ and $y$ occurring in the second rule's body and head, we have two possibilities: we can either choose $x=y$ or $x\neq y$. Let us first consider the program 
\begin{align}\label{equ:PlusTree} {\bf Plus}(Tree(u,x,x))= \left\{
\begin{array}{l}
	plus(void,y,y)\\
	plus(t(u,x,x),y,t(u,z,z))\leftarrow\\
	\qquad plus(x,y,z).
\end{array}
\right\}.
\end{align} This program `appends' the tree in the second argument to each leaf of the {\em symmetric} tree in the first argument. Notice that all of the above programs are syntactically almost identical, e.g., we can transform $Plus$ into ${\bf Plus}(List)$ via a simple rewriting of terms. The next program shows that we can derive programs from ${\bf Plus}$ which syntactically differ more substantially from the above programs. Concretely, the program 
\begin{align*} {\bf Plus}(Tree(u,x_1,x_2))= \left\{
\begin{array}{l}
	plus(void,y,y)\\
	plus(t(u,x_1,x_2),y,t(u,z_1,z_2))\leftarrow\\
	\qquad plus(x_1,y,z_1),\\
	\qquad plus(x_2,y,z_1),\\
	\qquad plus(x_1,y,z_2),\\
	\qquad plus(x_2,y,z_2).
\end{array}
\right\}
\end{align*} is logically equivalent to program \prettyref{equ:PlusTree}. However, in some situations this more complicated representation is beneficial. For example, we can now remove the second and third body atom to obtain the more compact program
\begin{align*} 
&plus(void,y,y)\\
&plus(t(u,x_1,x_2),y,t(u,z_1,z_2))\leftarrow\\
&\qquad plus(x_1,y,z_1),\\
&\qquad plus(x_2,y,z_2).
\end{align*} This program, in analogy to program \prettyref{equ:PlusTree}, `appends' the tree in the second argument to each leaf of the not necessarily symmetric tree in the first argument and thus generalizes \prettyref{equ:PlusTree}. Generally speaking, solutions to proportional equations my be inexact in nature needing further transformation in order to satisfy additional information and constraints.
\end{example}

\begin{example}\label{exa:mb{Even}} In \prettyref{exa:Even}, we have constructed the program $Even$, representing the even numbers, from $Nat$ by inheriting its fact and by iterating its proper rule once. By replacing $Nat$ in \prettyref{equ:Even} by a program variable $Z$, we arrive at the form
\begin{align}\label{equ:mb{Even}} \mathbf{Even}(Z):=facts(Z)\cup (proper(Z)\circ proper(Z)).
\end{align} We can now instantiate this form with arbitrary programs to transfer the concept of ``evenness'' to other domains. For example, consider the program $Reverse$ for reversing lists given by $$Reverse:=Reverse_0\cup {\bf Plus}(List(u,x)),$$ where
\begin{align*} Reverse_0:= \left\{
\begin{array}{l}
	reverse([\;\,],[\;\,])\\
	reverse([u\mid x],y)\leftarrow\\
	\qquad reverse(x,z),\\
	\qquad plus(z,[u],y)
\end{array}
\right\}.
\end{align*} By instantiating the form $\mathbf{Even}$ with $Reverse$, we obtain the program
\begin{align*} \mathbf{Even}(Reverse)= \left\{
\begin{array}{l}
	reverse([\;\,],[\;\,])\\
	reverse([u_1,u_2\mid x],[u_3\mid y])\leftarrow\\
	\qquad reverse(x,z),\\
	\qquad plus(z,[u_2],[u_3\mid w]),\\
	\qquad plus(w,[u_1],y),\\
	plus([\;\,],y,y)\\
	plus([u_1,u_2\mid x],y,[u_1,u_2\mid z])\leftarrow\\
	\qquad plus(x,y,z).
\end{array}
\right\}.
\end{align*} One can verify that this program reverses lists of {\em even} length. Similarly, if $Sort$ is a program for sorting lists, then ${\bf Even}(Sort)$ is a program for sorting ``even'' lists and so on.
\end{example}

\begin{example}\label{exa:mb{Member}} The program for checking list membership is given by
\begin{align*} Member:= \left\{
\begin{array}{l}
	member(u,[u\mid x])\\
	member(u,[v\mid x])\leftarrow\\
	\qquad member(u,x)
\end{array}
\right\}.
\end{align*} 

Notice the syntactic similarity between the program $List$ of \prettyref{exa:Nat_} and the second arguments in $Member$ --- in fact, we can deconcatenate $Member$ as follows:
\begin{align*} Member= \left\{
\begin{array}{l}
	member(u)\\
	member(u)\leftarrow\\
	\qquad member(u)
\end{array}
\right\}\cdot \left\{
\begin{array}{l}
	member([u\mid x])\\
	member([v\mid x])\leftarrow\\
	\qquad member(x)
\end{array}
\right\}.
\end{align*} The second factor can be expressed in terms of $List$ via
\begin{align*} \{member([u\mid x])\}=\left(proper(List(u,x))\circ body(proper(List(u,x)))\right)[list/member]
\end{align*} and
\begin{align*} \{member([v\mid x])\leftarrow member(x)\}=proper(List(v,x))[list/member].
\end{align*} This yields the form ${\bf Member}(Z(u,\vec x)\langle q\rangle)$, where $\vec x$ is a (possibly empty) sequence of variables, given by\todo{$\vec x$ kommt nicht vor}
\begin{align*} \left\{
\begin{array}{l}
	member(u)\\
	member(u)\leftarrow\\
	\qquad member(u)
\end{array}
\right\}\cdot \left\{
\begin{array}{l}
	proper(Z(u,x))\circ body(proper(Z(u,x)))\\
	proper(Z(u,x))[u/v]
\end{array}
\right\}[q/member].
\end{align*} 

We can now ask --- by analogy --- what ``membership'' means in the numerical domain. For this, we compute
\begin{align*} {\bf Member}(Nat(u))= \left\{
\begin{array}{l}
	member(u,s(u))\\
	member(u,s(v))\leftarrow\\
	\qquad member(u,v)
\end{array}
\right\}.
\end{align*} One can easily check that this program computes the ``less than'' relation between numerals.
\end{example}

\section{Logic program proportions}

This is the main section of the paper. Recall from \prettyref{exa:exa} that we want to formalize analogical reasoning and learning in the logic programming setting via directed analogical proportions between programs. For this, we instantiate here a fragment of \citeS{Antic22} abstract algebraic framework of analogical proportions within the algebra of logic programs from above using logic program forms.

Let us first recall \citeS{Antic22} framework, where we restrict ourselves to the directed fragment. In the rest of the paper, we may assume some ``known'' source domain $\mathfrak P$ and some ``unknown'' target domain $\mathfrak R$, both algebras of logic programs over some languages $L_\mathfrak P$ and $L_\mathfrak R$, respectively. We may think of the source domain $\mathfrak P$ as our background knowledge --- a repertoire of programs we are familiar with --- whereas $\mathfrak R$ stands for an unfamiliar domain which we want to explore via analogical transfer from $\mathfrak P$. For this we will consider directed analogical equations of the form `$P$ transforms into $Q$ as $R$ transforms into $X$' --- in symbols, $P\to Q\righttherefore R\to X$ --- where $P$ and $Q$ are programs of $\mathfrak P$, $R$ is a program of $\mathfrak R$, and $X$ is a program variable. The task of learning logic programs by analogy is then to solve such equations and thus to expand our knowledge about the intimate relationships between (seemingly unrelated) programs, that is, solutions to directed analogical equations will be programs of $\mathfrak R$ which are obtained from $R$ in $\mathfrak R$ as $Q$ is obtained from $P$ in $\mathfrak P$ in a mathematically precise way (\prettyref{def:PQRZ}). Specifically, we want to functionally relate programs via rewrite rules as follows. Recall from \prettyref{exa:exa} that transforming $Nat$ into $Plus$ means transforming ${\bf Id}(Nat)$ into ${\bf Plus}(Nat)$, where ${\bf Id}(Z):=Z$ and ${\bf Plus}(Z)$ are forms. We can state this transformation more pictorially as the rewrite rule ${\bf Id}\to{\bf Plus}$. Now transforming the program $List$ `in the same way' means to transform ${\bf Id}(List)$ into ${\bf Plus}(List)$, which again is an instance of ${\bf Id}\to{\bf Plus}$. Let us make this notation official. We will always write ${\bf F}(\vec Z)\to{\bf G}(\vec Z)$ or $\bf F\to G$ instead of $(\bf{F,G})$, for any pair of forms $\mathbf F$ and $\mathbf G$ containing program variables among $\vec Z$ such that every program variable in $\mathbf G$ occurs in $\mathbf F$. We call such expressions {\em justifications}. We denote the set of all justifications with variables among $\vec Z$ by $J(\vec Z)$. We make the convention that $\to$ binds weaker than every other algebraic operation.

The above explanation motivates the following definition. Define the {\em set of justifications} of two programs $P$ and $R$ in $\mathfrak P$ by
\begin{align*} 
	Jus_\mathfrak P(P\to R):=\left\{\mathbf F\to\mathbf G\in J(\vec Z) \;\middle|\; P\to R=\mathbf F(\vec O)\to\mathbf G(\vec O),\text{ for some }\vec O\in\mathfrak P^{|\vec Z|}\right\}.
\end{align*} For instance, $Jus(Nat\to {\bf Plus}(Nat))$ and $Jus(List\to {\bf Plus}(List))$ both contain the justification $Z\to{\bf Plus}(Z)$.

We are now ready to state the main definition of the paper as an instance of (the directed fragment of) \citeS[Definition 5]{Antic22}. 

\begin{definition}\label{def:PQRZ} A {\em directed program equation} in $(\mathfrak{P,R})$ is an expression of the form `$P$ transforms into $Q$ as $R$ transforms into $X$' --- in symbols,
\begin{align}\label{equ:PQRZ} 
	P\to Q\righttherefore R\to X,
\end{align} where $P$ and $Q$ are source programs from $\mathfrak P$, $R$ is a target program from $\mathfrak R$, and $X$ is a program variable. 

Given a target program $S\in\mathfrak R$, define the set of {\em justifications} of $P\to Q\righttherefore R\to S$ in $(\mathfrak{P,R})$ by
\begin{align*} Jus_{(\mathfrak{P,R})}(P\to Q\righttherefore R\to S):=Jus_\mathfrak P(P\to Q)\cap Jus_\mathfrak R(R\to S).
\end{align*} 

We say that $J$ is a {\em trivial set of justifications} in $(\mathfrak{P,R})$ iff every justification in $J$ justifies every directed proportion $P\to Q\righttherefore R\to S$ in $(\mathfrak{P,R})$, that is, iff
\begin{align*} 
    J\subseteq Jus_{(\mathfrak{P,R})}(P\to Q\righttherefore R\to S)\quad\text{for all $P,Q\in\mathfrak P$ and $R,S\in\mathfrak R$.}
\end{align*} In this case, we call every justification in $J$ a {\em trivial justification} in $(\mathfrak{P,R})$. 

Now we call $S$ a {\em solution} to \prettyref{equ:PQRZ} in $(\mathfrak{P,R})$ iff either $Jus_\mathfrak P(P\to Q)\cup Jus_\mathfrak R(R\to S)$ consists only of trivial justifications, in which case there is neither a non-trivial transformation of $P$ into $Q$ in $\mathfrak P$ nor of $R$ into $S$ in $\mathfrak R$; or $Jus_{(\mathfrak{P,R})}(P\to Q\righttherefore R\to S)$ is maximal with respect to subset inclusion among the sets $Jus_{(\mathfrak{P,R})}(P\to Q\righttherefore R\to S')$, $S'\in\mathfrak R$, containing at least one non-trivial justification, that is, for any program $S'\in \mathfrak R$,
\begin{align*} Jus_{(\mathfrak{P,R})}(P\to Q\righttherefore R\to S)&\subseteq Jus_{(\mathfrak{P,R})}(P\to Q\righttherefore R\to S')
\end{align*} implies
\begin{align*} Jus_{(\mathfrak{P,R})}(P\to Q\righttherefore R\to S')\subseteq Jus_{(\mathfrak{P,R})}(P\to Q\righttherefore R\to S).
\end{align*} In this case, we say that $P,Q,R,S$ are in {\em directed logic program proportion} in $(\mathfrak{P,R})$ written as
\begin{align*} (\mathfrak{P,R})\models P\to Q\righttherefore R\to S.
\end{align*} 

We denote the set of all solutions to \prettyref{equ:PQRZ} in $(\mathfrak{P,R})$ by $Sol_{(\mathfrak{P,R})}(P\to Q\righttherefore R\to X)$.
\end{definition}

Roughly, a program $S$ in the target domain is a solution to a directed program equation of the form $P\to Q\righttherefore R\to X$ iff there is no other target program $S'$ whose transformation from $R$ is more similar to the transformation of $P$ into $Q$ in the source domain expressed in terms of maximal sets of algebraic justifications.

We will always write $\mathfrak P$ instead of $(\mathfrak{P,P})$. In what follows, we will usually omit trivial justifications from notation. So, for example, we will write $Jus_{(\mathfrak{P,R})}(P\to Q\righttherefore R\to S)=\emptyset$ instead of $Jus_{(\mathfrak{P,R})}(P\to Q\righttherefore R\to S)=\{\text{trivial justifications}\}$ in case $P\to Q\righttherefore R\to S$ has only trivial justifications in $(\mathfrak{P,R})$, et cetera. The empty set is always a trivial set of justifications. Every justification is meant to be non-trivial unless stated otherwise. 

The forms
\begin{align*} 
	tr_1(X,Y):=(X\cap Y)\cup (X-Y) \quad\text{and}\quad tr_2(X,Y):=(X\cap Y)\cup (Y-X)
\end{align*} justify {\em any} proportion $P\to Q\righttherefore R\to S$, which shows that $tr_1\to tr_2$ is a trivial justification. This example shows that trivial justifications may contain useful information about the underlying structures --- in this case, it encodes the trivial observation that any two programs $P$ and $Q$ are symmetrically related via $P=(P\cap Q)\cup (P-Q)$ and $Q=(P\cap Q)\cup (Q-P)$.

We call a form $\mathbf F(\vec Z)$ a {\em $\mathfrak P$-generalization} of a program $P$ in $\mathfrak P$ iff $P=\mathbf F(\vec O)$, for some $\vec O\in\mathfrak P^{|\vec Z|}$, and we denote the set of all $\mathfrak P$-generalizations of $P$ in $\mathfrak P$ by $Gen_\mathfrak P(P)$. Moreover, we define for any programs $P\in\mathfrak P$ and $R\in\mathfrak R$:
\begin{align*} 
    Gen_{(\mathfrak{P,R})}(P,R):=Gen_\mathfrak P(P)\cap Gen_\mathfrak R(R).
\end{align*}



\begin{example} Consider the directed equation of \prettyref{exa:exa} given by
\begin{align}\label{equ:Nat_Plus_List_Z} Nat\to Plus\righttherefore List\to X.
\end{align} This equation asks for a list program $S$ which is obtained from $List$ as the program $Plus$ on numerals is obtained from $Nat$. In \prettyref{exa:Nat_Plus_List_PlusList}, we will see that the program for concatenating lists is a solution to \prettyref{equ:Nat_Plus_List_Z}.
\end{example}

\begin{example} Consider the directed equation given by
\begin{align}\label{equ:Nat_Even_Reverse_Z} Nat\to Even\righttherefore Reverse\to X,
\end{align} where $Reverse$ is the program for reversing lists of \prettyref{exa:mb{Even}}. In \prettyref{exa:Nat_Even_Reverse_EvenReverse}, we will see that the program for reversing lists of even length is a solution to \prettyref{equ:Nat_Even_Reverse_Z}.
\end{example}

To guide the AI-practitioner, we shall now rewrite the above framework in a more algorithmic style \cite<cf.>[Pseudocode 17]{Antic22}.

\begin{pseudocode}\label{pseudocode:Sol} Computing the solution set $\mathcal S$ to a directed logic program equation $P\to Q \righttherefore R\to X$ consists of the following steps:
\begin{enumerate}
\item Compute $\mathcal S_0:=Sol_{(\mathfrak{P,R})}(P\to Q\righttherefore R\to X)$:
    \begin{enumerate}
    \item For each $S\in\mathfrak R$, if $Jus_\mathfrak P(P,Q)\cup Jus_\mathfrak R(R,S)$ consists only of trivial justifications, then add $S$ to $\mathcal S_0$.

    \item For each form $\mathbf F(\vec Z)\in Gen_{(\mathfrak{P,R})}(P,R)$ and all witnesses $\vec O_1\in\mathfrak P^{|\vec Z|},\vec O_2\in\mathfrak R^{|\vec Z|}$ satisfying
    \begin{align*}
        P=\mathbf F(\vec O_1) \quad\text{and}\quad R=F(\vec O_2),
    \end{align*} and for each form $\mathbf G(\vec Z)\in Gen_\mathfrak P(Q)$ containing only variables occurring in $\mathbf F(\vec Z)$ and satisfying
    \begin{align*} 
        Q=\mathbf G(\vec O_1),
    \end{align*} add $\mathbf{F\to G}$ to $Jus_{(\mathfrak{P,R})}(P\to Q\righttherefore R\to\mathbf G(\vec O_2))$.

    \item Identify those non-empty sets $Jus_{(\mathfrak{P,R})}(P\to Q\righttherefore R\to S)$ which are subset maximal with respect to $S$ and add those $S$'s to $\mathcal S_0$.
    \end{enumerate}
\item For each $S\in\mathcal S_0$, check the following relations with the above procedure:
    \begin{enumerate}
    \item $R\in Sol_{(\mathfrak{P,R})}(Q\to P\righttherefore S\to X)$?
    \item $Q\in Sol_{(\mathfrak{B,A})}(R\to S\righttherefore P\to X)$?
    \item $P\in Sol_{(\mathfrak{B,A})}(S\to R\righttherefore Q\to X)$?
    \end{enumerate} Add those $S\in\mathcal S_0$ to $\mathcal S$ which pass all three tests. The set $\mathcal S$ now contains all solutions to $P:Q::R:X$ in $(\mathfrak{P,R})$.
\end{enumerate}
\end{pseudocode}

\section{Properties of logic program proportions}\label{sec:Properties_}

We summarize here \citeS{Antic22} most important properties of analogical equations and proportions interpreted in the logic programming setting from above.

\subsection{Characteristic justifications}

Computing all justifications of an analogical proportion is complicated in general, which fortunately can be omitted in many cases.

We call a set $J$ of justifications a {\em characteristic set of justifications} \cite[Definition 20]{Antic22} of $P\to Q\righttherefore R\to S$ in $(\mathfrak{P,R})$ iff $J$ is a sufficient set of justifications of $P\to Q\righttherefore R\to S$ in $(\mathfrak{P,R})$, that is, iff
\begin{enumerate}
	\item $J\subseteq Jus_{(\mathfrak{P,R})}(P\to Q\righttherefore R\to S)$, and
	\item $J\subseteq Jus_{(\mathfrak{P,R})}(P\to Q\righttherefore R\to S')$ implies $S'=S$, for each $S'\in\mathfrak P$.
\end{enumerate} In case $J=\{\mathbf F\to\mathbf G\}$ is a singleton, we call $\mathbf F\to\mathbf G$ a {\em characteristic justification} of $P\to Q\righttherefore R\to S$ in $(\mathfrak{P,R})$. 



The following lemma is a useful characterization of characteristic justifications in terms of mild injectivity \cite<cf.>[Uniqueness Lemma]{Antic22}.

\begin{lemma}[Uniqueness Lemma]\label{lem:Uniqueness_Lemma} For any justification $\mathbf F(\vec Z)\to\mathbf G(\vec Z)$ of $P\to Q\righttherefore R\to S$ in $(\mathfrak{P,R})$, if there is a unique $\vec O\in\mathfrak R^{|\vec Z|}$ such that $R=\mathbf F(\vec O)$, then $\mathbf F\to\mathbf G$ is a characteristic justification of $P\to Q\righttherefore R\to S$ in $(\mathfrak{P,R})$.
\end{lemma}
\begin{proof} See the proof of \citeS[Uniqueness Lemma]{Antic22}.
\end{proof}

\subsection{Functional proportion theorem}\label{sec:Functional_proportion_theorem}

In the rest of the paper, we will often use the following reasoning pattern which roughly says that {\em functional dependencies} are preserved across (different) domains \cite<cf.>[Functional Proportion Theorem]{Antic22}:

\begin{theorem}[Functional Proportion Theorem]\label{thm:Functional_Proportion_Theorem} For any $(\mathfrak{A\cap B})$-form $\mathbf G(Z)$, 
we have
\begin{align*} 
	(\mathfrak{P,R})\models P\to \mathbf G(P)\righttherefore R\to \mathbf G(R),\quad\text{for all $P\in\mathfrak P$ and $R\in\mathfrak R$}.
\end{align*} In this case, we call $G(R)$ a {\em functional solution} of $P\to Q \righttherefore R\to X$ in $(\mathfrak{P,R})$ {\em characteristically justified by $Z\to\mathbf G(Z)$}.
\end{theorem}
\begin{proof} See the proof of \citeS[Functional Proportion Theorem]{Antic22}.
\end{proof}

Functional solutions are plausible since transforming $P$ into $\mathbf G(P)$ and $R$ into $\mathbf G(R)$ is a direct implementation of ``transforming $P$ and $R$ in the same way'', and it is therefore surprising that functional solutions can be nonetheless `unexpected' and therefore `creative' as will be demonstrated in \prettyref{sec:Examples}.

\begin{remark}\label{rem:Q} An interesting consequence of \prettyref{thm:Functional_Proportion_Theorem} is that in case $Q\in \mathfrak P\cap\mathfrak R$ is a constant program contained in both domains $\mathfrak P$ and $\mathfrak R$, we have
\begin{align}\label{equ:PTRT} 
	(\mathfrak{P,R})\models P\to Q\righttherefore R\to Q,\quad\text{for {\em all} $P\in\mathfrak P$ and $R\in\mathfrak R$},
\end{align} characteristically justified by \prettyref{thm:Functional_Proportion_Theorem} via $Z\to Q$. This can be intuitively interpreted as follows: every program in $\mathfrak P\cap\mathfrak R$ has a `name' and can therefore be used to form logic program forms, which means that it is in a sense a ``known'' program. As the framework is designed to compute ``novel'' or ``unknown'' programs in the target domain via analogy-making, \prettyref{equ:PTRT} means that ``known'' target programs can always be computed.
\end{remark}

The following result summarizes some useful consequences of \prettyref{thm:Functional_Proportion_Theorem}.

\begin{corollary} For any source program $P\in\mathfrak P$, target program $R\in\mathfrak R$, and joint programs $Q,S\in\mathfrak P\cap\mathfrak R$, the following proportions hold in $(\mathfrak{P,R})$:
\begin{align*} 
	P\to P^c&\righttherefore R\to R^c\\
	P\to P\cup Q&\righttherefore R\to R\cup Q\\
	P\to (Q\circ P)\circ S&\righttherefore R\to (Q\circ R)\circ S\\
	P\to Q\cdot P\cdot S&\righttherefore R\to Q\cdot R\cdot S\\
	P\to facts(P)&\righttherefore R\to facts(R)\\
	P\to head(P)&\righttherefore R\to head(R)\\
	P\to body(P)&\righttherefore R\to body(R)\\
	P\to LM(P)&\righttherefore R\to LM(R).
\end{align*}
\end{corollary}

The following result is an instance of \citeA[Theorem 28]{Antic22}.

\begin{corollary} For any logic programs $P,Q\in\mathfrak P$ and $R\in\mathfrak R$, we have
\begin{align}\label{equ:reflexivity} 
	&(\mathfrak{P,R})\models P\to P\righttherefore R\to R\quad\text{(inner reflexivity)},\\
	&\mathfrak P\models P\to Q\righttherefore P\to Q\quad\text{(reflexivity)}.
\end{align}
\end{corollary}

\section{Examples}\label{sec:Examples}

In this section, we demonstrate the idea of learning logic programs by analogy via directed logic program proportions by giving some illustrative examples.

\begin{example}\label{exa:a_la_b} Let $A=\{a,b\}$ and $B=\{c,d\}$ be propositional alphabets, and let $\mathfrak P$ and $\mathfrak R$ for the moment be the identical spaces of all propositional programs over $A\cup B$. Consider the following directed equation:
\begin{align}\label{equ:a_la_b} 
	\{a\leftarrow b\}\to \left\{
	\begin{array}{l}
	 	b\\
	  	a\leftarrow b
	\end{array}
	\right\}\righttherefore \{c\leftarrow d\}\to X.
\end{align} Here we have at least two candidates for the solution $S$. 

First, we can say that the second program in \prettyref{equ:a_la_b} is obtained from the first by adding the fact $b$, in which case we expect --- by analogy --- that 
\begin{align}\label{equ:S} 
	S=\left\{
	\begin{array}{l}
	  	b\\
	  	c\leftarrow d
	\end{array}
	\right\}
\end{align} is a solution to \prettyref{equ:a_la_b}. Define the form
\begin{align}\label{equ:G} 
	\mathbf G(Z):=Z\cup\{b\}.
\end{align} Then the computations 
\begin{align}\label{equ:G1} 
	\mathbf G(\{a\leftarrow b\})=\left\{
	\begin{array}{l}
	  	b\\
	  	a\leftarrow b
	\end{array}
\right\} \quad\text{and}\quad\mathbf G(\{c\leftarrow d\})=S
\end{align} show that $S$ is indeed a solution by \prettyref{thm:Functional_Proportion_Theorem}, that is, we have
\begin{align}\label{equ:GG} 
	\{a\leftarrow b\}\to \mathbf G(\{a\leftarrow b\})\righttherefore \{c\leftarrow d\}\to \mathbf G(\{c\leftarrow d\}).
\end{align}

However, what if we separate the two domains by saying that $\mathfrak P$ and $\mathfrak R$ are the spaces of propositional programs over the disjoint alphabets $A$ and $B$, respectively? In this case, $Z\to \mathbf G(Z)$ is no longer a valid justification of \prettyref{equ:GG} as $\{b\}$ in \prettyref{equ:G} is not contained in $\mathfrak P\cap\mathfrak R$. This makes sense since, in this case, the ``solution'' $S$ contains the fact $b$ alien to the target domain $\mathfrak R$. Thus the question is whether we can redefine $\mathbf G$, without using the fact $b$, so that \prettyref{equ:G1} holds. Observe that $b$ is also the body of $a\leftarrow b$, which motivates the following definition:
\begin{align*} 
	\mathbf G'(Z):=Z\cup body(Z).
\end{align*} A simple computation shows that $\mathbf G'$ satisfies
\begin{align*} 
	\mathbf G'(\{a\leftarrow b\})= \left\{
	\begin{array}{l}
		b\\
		a\leftarrow b
	\end{array}
	\right\},
\end{align*} which means that we can compute a solution $S'$ of \prettyref{equ:a_la_b} via \prettyref{thm:Functional_Proportion_Theorem} as
\begin{align*} 
	S':=\mathbf G'(\{c\leftarrow d\})=\left\{
	\begin{array}{l}
	  	d\\
	  	c\leftarrow d
	\end{array}
	\right\}.
\end{align*}
\end{example}

\begin{example}\label{exa:Nat_Plus_List_PlusList} Reconsider the situation in \prettyref{exa:mb{Plus}}, where we have derived the abstract form ${\bf Plus}$ generalizing addition. As a consequence of \prettyref{thm:Functional_Proportion_Theorem}, we have the following directed logic program proportion:\footnote{For simplicity, we omit here the variables $u$ and $x$ from notation, that is, we write $Nat$ and $List$ instead of $Nat(x)$ and $List(u,x)$, respectively.}
\begin{align}\label{equ:Nat_Plus_} 
	Nat\to {\bf Plus}(Nat)\righttherefore List\to {\bf Plus}(List).
\end{align} This proportion formalizes the intuition that ``numbers are to addition what lists are to list concatenation.'' Similarly, we have
\begin{align*} 
	Nat\to {\bf Plus}(Nat)\righttherefore Tree\to {\bf Plus}(Tree).
\end{align*}

Without going into technical details, we want to mention that a similar procedure as in \prettyref{exa:mb{Plus}} applied to a program for multiplication yields a form $\mathbf{Times}(Z\langle q\rangle(\vec x))$ such that $\mathbf{Times}(List(u,x))$ is a program for ``multiplying'' lists, e.g.,
\begin{align*} 
	\mathbf{Times}(List(u,x))\models times([a,a],[b,b],[b,b,b,b]),
\end{align*} where the result $[b,b,b,b]$ is obtained from the input lists by concatenating the second list $[b,b]$ $k$ times with itself, where $k$ is the length of the first list (in this case $k=2$; the actual content of the first list does not matter here). We then have the following directed logic program proportion as an instance of \prettyref{thm:Functional_Proportion_Theorem}:
\begin{align*} 
	Nat\to \mathbf{Times}(Nat)\righttherefore List\to \mathbf{Times}(List).
\end{align*} In other words, addition is to multiplication what list concatenation is to list ``multiplication.''
\end{example}

\begin{example}\label{exa:Nat_Even_Reverse_EvenReverse} In \prettyref{exa:Even}, we have constructed $Even$ from $Nat$ via composition and in \prettyref{exa:mb{Even}}, we have then derived the abstract form $\mathbf{Even}$ generalizing ``evenness.'' As a consequence of \prettyref{thm:Functional_Proportion_Theorem}, we have the following directed logic program proportion:
\begin{align*} 
	Nat\to\mathbf{Even}(Nat)\righttherefore Reverse\to\mathbf{Even}(Reverse).
\end{align*} This shows that the (seemingly unrelated) program for reversing lists of {\em even} length shares the syntactic property of ``evenness'' with the program for constructing the even numbers.
\end{example}

\begin{example}\label{exa:List_Member_Nat_Z} In \prettyref{exa:mb{Member}}, we have derived the abstract form ${\bf Member}$ generalizing ``membership'' and we have asked the following question: What does ``membership'' mean in the numerical domain? We can now state this question formally in the form of the following directed logic program equation:
\begin{align*} 
	List\to Member\righttherefore Nat\to X.
\end{align*} As a consequence of \prettyref{thm:Functional_Proportion_Theorem}, we have the following directed logic program proportion:
\begin{align*} 
	List(u,x)\to {\bf Member}(List(u,x))\righttherefore Nat(u)\to {\bf Member}(Nat(u)),
\end{align*} where ${\bf Member}(Nat(u))$ is the program computing the numerical ``less than'' relation of \prettyref{exa:mb{Member}}.
\todo[inline]{What happens if we replace $Nat(u)$ by $Nat(x)$?}
\end{example}

\section{Related Work}

Arguably, the most prominent (symbolic) model of analogical reasoning to date is \citeS{Gentner83} {\em Structure-Mapping Theory} (or {\em SMT}), first implemented by \citeA{Falklenhainer89}. Our approach shares with Gentner's SMT its symbolic nature. However, while in SMT mappings are constructed with respect to meta-logical considerations --- for instance, Gentner's {\em systematicity principle} prefers connected knowledge over independent facts --- in our framework `mappings' are realized via directed logic program proportions satisfying mathematically well-defined properties.

Formal models of analogical proportions started to appear only recently \cite{Lepage01,Miclet09,Stroppa06}.


The functional-based view in \citeA{Barbot19} is related to our \prettyref{thm:Functional_Proportion_Theorem} on the preservation of functional dependencies across different domains (\prettyref{sec:Functional_proportion_theorem}). Moreover, \citeA[§7.3]{Antic22} contains a brief discussion on the important difference between analogical proportions and categories as studied in abstract algebra.


Heuristic-Driven Theory Projection (HDTP) \cite{Schmidt14} has a similar focus on analogical proportions between logical theories. The critical difference to our approach is that in our framework, we consider the set of {\em all} generalizations of a program, whereas in HDTP only minimally general generalizations (mggs) are considered, that is, there is no notion of ``justification'' in HDTP and proportions are ``'justified'' by mggs only. Another difference is that HDTP is formulated within first-order logic, whereas our framework is formulated within logic programming. The main benefit of restricting the formalism to logic programs (i.e., sets of Horn clauses) is that the rule-like syntactic form of logic programs allows an algebraization via composition and concatenation --- this is not the case for first-order logic. Moreover, the task of finding generalizations is governed by heuristics in HDTP, which has no counterpart in our theory. In a sense, similar to HDTP, our framework can be interpreted as a generalization of classical anti-unification \cite{Plotkin70,Reynolds70}. More precisely, while anti-unification focuses on {\em least general} generalizations of {\em terms}, we are interested here in {\em all} generalizations of {\em programs} (i.e. logical {\em theories}).

Finally, we want to mention the recent work in \citeA{Antic21-4} where a syntactic and algebraic notion of logic program similarity has been introduced via sequential compositions and decompositions as defined in \prettyref{sec:Composition}.

\section{Future Work}

In this paper, we have demonstrated the utility of our framework of directed logic program proportions for learning logic programs by analogy with numerous examples. 

The main task for future research is to develop methods for the algorithmic computation of solutions to directed program equations as defined in this paper (see \prettyref{pseudocode:Sol}). At its core, this requires algebraic methods for logic program decomposition \cite{Antic21-1,Antic21-2} and deconcatenation, which are then used to compute forms generalizing a given program and (characteristic) justifications of a directed proportion. This task turns out to be highly non-trivial even for the propositional case. In fact, the only domains I fully understand at the moment is the 2-valued boolean domain consisting only of two elements 0 and 1 --- and already in that simple case a whole paper is needed to fully describe all solutions \cite{Antic21-3}! For example, in the arithmetical domain of natural numbers with multiplication, computing all solutions even to a single concrete analogical equation is non-trivial: computing all solutions to $20:4::30:x$ requires an 8-page long computation \cite<cf.>[pp.42, Example 66]{Antic22}. Since logic programs are more complicated than booleans or numbers, providing general algorithms for the computation of some or all solutions to analogical logic program equations is highly non-trivial even in the propositional case and far beyond the scope of the current paper. 

This does not mean that the framework is useless for learning --- to the contrary, the paper shows, I hope, quite convincingly that learning of logic programs via solving analogical equations (which appears to be a novel idea) can {\em in principle} be done via solving (directed) logic program equations as proposed in the paper. It is therefore, in a sense, a ``declarative'' paper which shows {\em what} can be done with logic program proportions --- in the future, more ``procedural'' papers will be needed to resolve the issue of {\em how} solutions to equations are to be computed in practice.


Composition and concatenation are interesting operations on programs in their own right and a comparison to other operators for program modularity \cite<cf.>{Bugliesi94,Brogi99} remains as future work. A related question is whether these operations are sufficient for modeling all plausible analogies in logic programming or whether further operations are needed (``completeness''). It is important to emphasize that in the latter case, adding novel operations to the framework does not affect the general formulations of the core definitions.

In this paper, we have restricted ourselves to Horn programs. In the future we plan to adapt our framework to extended classes of programs as, for example, {\em higher-order} \cite<cf.>{Chen93,Miller12} and {\em non-monotonic} logic programming under the stable model or answer set semantics \cite{Gelfond91} and extensions thereof \cite<cf.>{Brewka11}. For this, we will define the composition and concatenation of answer set programs \cite{Antic21-2} which is non-trivial due to negation as failure occurring in rule bodies (and heads).

Finally, a formal comparison of analogical reasoning and learning as defined in this paper with other forms of reasoning and learning, most importantly inductive logic programming \cite{Muggleton91}, is desirable as this line of research may lead to an interesting combination of different learning methods. 

\section{Conclusion}

This paper studied directed analogical proportions between logic programs for logic-based analogical reasoning and learning in the setting of logic programming. This enabled us to compare logic programs possibly across different domains in a uniform way which is crucial for AI-systems. For this, we defined the composition and concatenation of logic programs and showed, by giving some examples, that syntactically similar programs have similar decompositions. This observation led us to the notion of logic program forms which are proper generalizations of logic programs. We then used forms to formalize directed analogical proportions between logic programs --- as an instance of the author's model of analogical proportions --- as a mechanism for deriving novel programs in an ``unknown'' target domain via analogical transfer --- realized by generalization and instantiation --- from a ``known'' source domain.

\section*{Acknowledgments}

We would like to thank the reviewers for their thoughtful and constructive comments, and for their helpful suggestions to improve the presentation of the article.


\section*{Conflict of interest}

The authors declare that they have no conflict of interest.

\section*{Data availability statement}

The manuscript has no data associated.

\if\isdraft1\newpage\fi
\bibliographystyle{theapa}
\bibliography{/Users/christianantic/Bibdesk/Bibliography,/Users/christianantic/Bibdesk/Preprints,/Users/christianantic/Bibdesk/Publications,/Users/christianantic/Bibdesk/Unpublished}
\if\isdraft1
\newpage

\section{Problems}

\begin{problem} The most ambitious goal which emerges from this paper is a full understanding of the logic program proportion relation $P:Q::R:S$ and of proportional logic program equations of the form $P:Q::R:V$...
\todo[inline]{}
\end{problem}

\begin{problem} Given a program $P$ and a form $\mathbf F$, what can be said about solutions to equations of the form
\begin{align*} 
	P=\mathbf F(\vec Z)?
\end{align*} Is it decidable to find solutions?...
\todo[inline]{}
\end{problem}

\begin{problem} Is the set of generalizations of a program recursive?... $Gen(P)$...\todo{def?} 
\todo[inline]{}
\end{problem}

\section{Vision}

In der Praxis waere es zB relevant, folgende p-Gleichung loesen zu koennen:
\begin{align*} 
	Mac\to macOS \righttherefore iPhone\to X.
\end{align*} Wenn wir jetzt $X=iphoneOS$ so berechnen koennten, dass $iphoneOS$ das Gewuenschte macht, dann hat man sich das haendische Uebertragen des Codes erspart!

\fi
\end{document}